\documentclass{article}
\usepackage{hyperref}
\usepackage{amscd,amsmath,amsthm,amssymb}
\usepackage{enumerate,varioref}
\usepackage{epsfig}
\usepackage{graphicx}
\usepackage{mathtools}
\usepackage{tikz}
\usepackage{xcolor,colortbl}
\newtheorem{thm}{Theorem}

\theoremstyle{definition}

\newtheorem{question}{Question:}
\theoremstyle{remark}

\newtheorem{rem}[thm]{Remark}
\numberwithin{equation}{section}

\newcommand{\half}{\frac{1}{2}}

\title{The Limiting Eigenvalue Distribution of Iterated $k$-Regular Graph Cylinders}
\author{Clark Alexander\footnote{Nousot Applied Research Group}, Tara Nenninger\footnote{NARG and University of Vermont}, Danielle Tucker\footnote{NARG and University of Illinois at Chicago} \\ email \href{mailto:clark@nousot.com}{clark@nousot.com}}

\begin{document}

\maketitle

\begin{abstract}
We explore the limiting empirical eigenvalue distributions arising from matrices of the form \[A_{n+1} = \begin{bmatrix}
A_n & I\\ I & A_n
\end{bmatrix} , \]where $A_0$ is the adjacency matrix of a $k$-regular graph.  We find that for bipartite graphs, the distributions are centered symmetric binomial distributions, and for non-bipartite graphs, the distributions are asymmetric. This research grew out of our work on neural networks in $k$-regular graphs.

Our original question was whether or not the graph cylinder construction would produce an expander graph that is a suitable candidate for the neural networks being developed at Nousot.  This question is answered in the negative for our computational purposes. However, the limiting distribution is still of theoretical interest to us; thus, we present our results here.
\end{abstract}

\section{Introduction}

\hspace{4mm} Our study began by looking for expander graphs that satisfy criteria by which we could write neural networks with a graph theoretic structure yielding a large spectral gap.  However, we find that the spectral gap of the normalized matrices corresponding to graph cylinders converges to zero, namely
\[
|\lambda_1 - \lambda_2| \in O(\varepsilon) \,,
\] where $\lambda_1$ and $\lambda_2$ are the first and second largest eigenvalues of the normalized adjacency matrix, respectively. Nonetheless, these graphs exhibit interesting spectral properties.   

In this paper, we will consider normalized matrices.  If $G$ is a $k$-regular graph, then we consider the matrix
\[
\frac{1}{k} A_G \,, 
\]
where $A_G$ is the corresponding adjacency matrix.\\ 

 The paper is set up as follows.  In Section 2, we define graph cylinders explicitly and label the vertices in such a way that the corresponding adjacency matrix has a convenient form for iteration.  We then make mention of how this relates to the Markov transition matrix for a classical random walk on a graph.  Section 3 gives the main theorems of this paper and presents a proof which shows that normalized iterated graph cylinders limit to Dirac $\delta$ distributions centered at zero,  and that non-normalized graph cylinders have asymmetric distributions.  Section 4 works through the first example of hypercubes, where we find that the spectral gap for $N$ iterations is exactly $\frac{2}{N+2}$.  These results are significantly more intuitive than those of a general graph, as hypercubes are a well-studied family of graphs.  We conclude with a discussion of this work and future work we intend to pursue.  Further, we discuss a few implications of this work in machine learning and artificial intelligence.

\section{Graph Cylinders}

\hspace{5mm}From classical geometry, we consider the cylinder of a set $S$ as the set of points $S\times[0,1]$.  For a graph $G = (V,E)$, the cylinder of $G$ is defined as $G_c = (V_c,E_c)$. The vertex set, $V_c$, is two copies of the vertex set of $G$, i.e.
\[
V_c = V\times \{0,1\}\,.
\]
The edge set, $E_c$, is two copies of the original set $E$ and the corresponding edges between copies of vertices.  Explicitly,
\[
E_c = \{(i_0,j_0),(i_1,j_1),(i_0,i_1) | i_k \in V\times\{k\} \}\,.
\]
\textbf{Example 1:}

Consider the graph $C_3$, which is isomorphic to an equilateral triangle:
\[
\begin{tikzpicture}
\node[circle,draw](1L) at (.7,1){1};
\node[circle,draw](2L) at (0,0){2};
\node[circle,draw](3L) at (1.4,0){3};
\foreach \from/\to in {1L/2L,1L/3L,2L/3L}
\draw (\from)--(\to);
\end{tikzpicture}
\]The graph $C_3$ has adjacency matrix
\[
A = \begin{bmatrix}
0 & 1 & 1\\
1 & 0 & 1\\
1 & 1 & 0
\end{bmatrix}.
\]
\pagebreak 

Now, consider the graph cylinder of $C_3$:

\[
\begin{tikzpicture}
\node[circle,draw](1L) at (.7,1){1L};
\node[circle,draw](2L) at (0,0){2L};
\node[circle,draw](3L) at (1.4,0){3L};
\node[circle,draw](1R) at (5.7,1){1R};
\node[circle,draw](2R) at (5,0){2R};
\node[circle,draw](3R) at (6.4,0){3R};
\foreach \from/\to in {1L/2L,1L/3L,2L/3L,1R/2R,1R/3R,2R/3R}
\draw (\from)--(\to);
\draw (1L)..controls(3.2,2).. (1R);
\draw (2L)..controls(2.5,1)..(2R);
\draw (3L)..controls(3.6,-1)..(3R);
\end{tikzpicture}
\]Since we connect each vertex in the first copy of the graph to its corresponding vertex in the second copy of the graph, we have the resulting adjacency matrix:

\[
A_c = \begin{bmatrix}
0 & 1 & 1 & 1 & 0 & 0\\
1 & 0 & 1 & 0 & 1 & 0\\
1 & 1 & 0 & 0 & 0 & 1\\
1 & 0 & 0 & 0 & 1 & 1\\
0 & 1 & 0 & 1 & 0 & 1\\
0 & 0 & 1 & 1 & 1 & 0
\end{bmatrix} = \begin{bmatrix}
A & I \\I & A
\end{bmatrix}\,
\]

\hspace{-5mm}\textbf{Example 2:}

Consider the graph cylinder constructed from a $C_6$: 

\[
\begin{tikzpicture}
\node[circle,draw](1T) at (1.1,5){1T};
\node[circle,draw](2T) at (-1,4){2T};
\node[circle,draw](3T) at (.5,3){3T};
\node[circle,draw](4T) at (2,3){4T};
\node[circle,draw](5T) at (4,4){5T};
\node[circle,draw](6T) at (2.6,5){6T};
\node[circle,draw](1B) at (1.1,0){1B};
\node[circle,draw](2B) at (-1,-1){2B};
\node[circle,draw](3B) at (.5,-2){3B};
\node[circle,draw](4B) at (2,-2){4B};
\node[circle,draw](5B) at (4,-1){5B};
\node[circle,draw](6B) at (2.6,0){6B};
\foreach \from/\to in {1T/2T,2T/3T,3T/4T,4T/5T,5T/6T,6T/1T,1B/2B,2B/3B,3B/4B,4B/5B,5B/6B,6B/1B,1T/1B,2T/2B,3T/3B,4T/4B,5T/5B,6T/6B}
\draw (\from)--(\to);
\end{tikzpicture}
\]
\[\text{The adjacency matrix of a $C_6$ graph is }
A = \begin{bmatrix}
0 & 1 & 0 & 0 & 0 & 1\\
1 & 0 & 1 & 0 & 0 & 0\\
0 & 1 & 0 & 1 & 0 & 0\\
0 & 0 & 1 & 0 & 1 & 0\\
0 & 1 & 0 & 1 & 0 & 1\\
1 & 0 & 0 & 0 & 1 & 0\\
\end{bmatrix},\] 

\[\text{ and the resulting adjacency matrix of its graph cylinder is }
A_c = \begin{bmatrix}
A & I \\
I & A
\end{bmatrix}. 
\] 
We have the following immediate properties of graph cylinders:
\begin{itemize}
\item	If $G$ is a $k$-regular graph, then the graph cylinder, $G_c$, is a ($k+1$)-regular graph.
\item If $A_G$ is the adjacency matrix of $G$, then $\begin{bmatrix}
A_G & I \\
I & A_G
\end{bmatrix}$ is the adjacency matrix of $G_c$.
\item Let $g_0$ be the girth of the original graph, $G$. The girth of the graph cylinder is $\min(g_0,4)$.  Suppose the girth of the upper graph is 5 or more. The resulting graph cylinder contains a 4-cycle. That is, if $(i,j)$ is an edge in $G_0$, $E_c$ contains $(i_0,j_0),(j_0,j_1),(j_1,i_1),(i_1,i_0)$.
\end{itemize}

The normalized matrix for a $k$-regular graph is $\frac{1}{k}A$.  Consider the iterates
\begin{eqnarray}
A_0 & = & A_G\\
\label{iterate}A_{n+1} & = & \begin{bmatrix}
A_n & I \\
I & A_n
\end{bmatrix}\\
W_n & = & \frac{1}{k+n} A_n
\end{eqnarray}

$W_n$ denotes that $W$ is the matrix corresponding to a random walk on graph $G$ iterated $n$ times.  Since $G$ is $k$ regular, the degree matrix, $D$, is simply $k\cdot I$. The walking matrix is
\[
W = D^{-1}A = \frac{1}{k}A,
\] 
which we have called $W_0$.

\section{The Distribution of Eigenvalues}

\hspace{4mm} Given a $k$-regular graph and its graph cylinder, iteration as shown in equation \ref{iterate} produces a family of regular graph cylinders.  We now attempt to answer the following:
\begin{question}
	What is the empirical distribution of eigenvalues of $\lim_{n\rightarrow \infty} W_n$?

\end{question}

In answering this, we shall first look at the eigenvalues of $A_n$. Let $\{k = \lambda_1 > \lambda_2 \ge \dots, \ge \lambda_n\}$ be the eigenvalues of $A_0$. 

\begin{thm}
	The eigenvalues of $A_1$ are the eigenvalues of $A_0$ shifted by one.  Without regarding explicit sorting of eigenvalues, the set is
	\[
	\{\lambda_1 +1, \lambda_1 - 1,\lambda_2+1,\lambda_2-1,\dots,\lambda_n+1,\lambda_n-1\}.
	\]
\end{thm}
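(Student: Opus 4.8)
The plan is to exploit the block structure of $A_1 = \begin{bmatrix} A_0 & I \\ I & A_0 \end{bmatrix}$ and build its eigenvectors directly from those of $A_0$. Since $A_0$ is the adjacency matrix of a graph it is real and symmetric, so by the spectral theorem it admits an orthonormal basis $v_1,\dots,v_n$ of eigenvectors with $A_0 v_i = \lambda_i v_i$. Because the off-diagonal blocks are the identity, each $v_i$ should lift to a ``symmetric'' and an ``antisymmetric'' block vector, namely $\begin{bmatrix} v_i \\ v_i \end{bmatrix}$ and $\begin{bmatrix} v_i \\ -v_i \end{bmatrix}$ in $\R^{2n}$, and these are the natural candidates for the eigenvectors realizing the two shifts $\lambda_i \pm 1$.

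First I would verify by a direct block computation that these really are eigenvectors. Writing out the product gives
\[
\begin{bmatrix} A_0 & I \\ I & A_0 \end{bmatrix} \begin{bmatrix} v_i \\ \pm v_i \end{bmatrix} = \begin{bmatrix} \lambda_i v_i \pm v_i \\ v_i \pm \lambda_i v_i \end{bmatrix} = (\lambda_i \pm 1)\begin{bmatrix} v_i \\ \pm v_i \end{bmatrix},
\]
so the symmetric lift has eigenvalue $\lambda_i + 1$ and the antisymmetric lift has eigenvalue $\lambda_i - 1$. Ranging over $i$, this immediately exhibits all $2n$ claimed values $\lambda_i \pm 1$ as eigenvalues of $A_1$.

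The remaining step — and the only place any care is needed — is to argue that these $2n$ eigenvectors form a complete set, so that no eigenvalue of the $2n \times 2n$ matrix $A_1$ is omitted and the list $\{\lambda_i+1,\lambda_i-1\}$ is the full spectrum with multiplicities. This is where the symmetry of $A_0$ pays off: since the $v_i$ are orthonormal, the symmetric and antisymmetric lifts are mutually orthogonal across all indices, hence linearly independent, hence a basis of $\R^{2n}$. I expect this counting/completeness bookkeeping to be the main (if minor) obstacle, the eigenvector verification itself being routine. As an independent check one could instead compute the characteristic polynomial: because every block of $A_1 - \mu I$ is a polynomial in $A_0$ and the blocks commute, the commuting-block determinant identity yields
\[
\det(A_1 - \mu I) = \det\big((A_0 - \mu I)^2 - I\big) = \det\big(A_0 - (\mu+1)I\big)\,\det\big(A_0 - (\mu-1)I\big),
\]
which factors the characteristic polynomial of $A_1$ as $\chi_{A_0}(\mu+1)\,\chi_{A_0}(\mu-1)$ and recovers exactly the shifted spectrum $\lambda_i \pm 1$.
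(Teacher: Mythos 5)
Your proof is correct, but it takes a genuinely different route from the paper. The paper's argument is purely determinantal: it applies the block-determinant identity $\det\begin{bmatrix} M_{11} & M_{12} \\ M_{12} & M_{11}\end{bmatrix} = \det(M_{11}+M_{12})\det(M_{11}-M_{12})$ (citing Sothanaphan's lemma on block matrices) to $A_1 - \lambda I$, obtaining $\det(A_0 - (\lambda-1)I)\det(A_0 - (\lambda+1)I) = 0$ --- i.e.\ exactly the factorization you mention only as an ``independent check'' at the end of your write-up. Your primary argument instead constructs the eigenvectors explicitly: the symmetric lift $\begin{bmatrix} v_i \\ v_i\end{bmatrix}$ and antisymmetric lift $\begin{bmatrix} v_i \\ -v_i \end{bmatrix}$ of an orthonormal eigenbasis of $A_0$, verified by direct block multiplication to carry eigenvalues $\lambda_i + 1$ and $\lambda_i - 1$ respectively, with completeness coming from the mutual orthogonality of the $2n$ lifts. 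Each approach buys something. Yours produces explicit eigenvectors of $A_1$ (and, iterating, of every $A_n$, which would make the inductive multiplicity bookkeeping in the paper's second theorem concrete), and it sidesteps the invertibility hypothesis attached to the cited block-determinant formula --- a hypothesis that fails precisely when $\lambda \pm 1$ is an eigenvalue of $A_0$, a subtlety the paper glosses over and which is usually repaired by a continuity or polynomial-identity argument. The paper's computation, in exchange, is shorter and delivers all multiplicities at once through the characteristic polynomial $\det(A_1 - \lambda I) = \chi_{A_0}(\lambda-1)\,\chi_{A_0}(\lambda+1)$ without choosing any basis. Your completeness step is the one piece the eigenvector route must not skip --- exhibiting $2n$ eigenvalues of a $2n \times 2n$ matrix says nothing about multiplicities unless the corresponding eigenvectors are independent --- and your orthogonality observation settles it correctly.
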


\begin{proof}
	Consider the matrix $A_1 = \begin{bmatrix}
	A_0 & I \\ I & A_0
	\end{bmatrix}$.
	The eigenvalues of $A_1$ are given by the solutions to the equation\[
	\det (A_1 - \lambda\cdot I) = 0. 
	\]
	In block form, we have
\begin{equation}
A_1 - \lambda\cdot I = \begin{bmatrix}
A_0 - \lambda\cdot I & I\\
I & A_0 - \lambda\cdot I
\end{bmatrix}.
\end{equation}
	
From basic linear algebra and specifically Lemma 6.3 in \cite{S}, we know that the determinant of a block matrix with square blocks is given by
\begin{equation}
\det \left(\begin{bmatrix}
M_{11} & M_{12} \\
M_{21} & M_{22}
\end{bmatrix}\right) = \det(M_{11} - M_{12}M_{22}^{-1}M_{21})\det(M_{22})
\end{equation}
so long as $M_{22}$ is invertible. In the special case where $M_{11} = M_{22}$ and $M_{12} = M_{21}$,	we have
\begin{equation}
\det\left(\begin{bmatrix}
M_{11} & M_{12} \\ M_{12} & M_{11}
\end{bmatrix}\right) = \det(M_{11} + M_{12})\det(M_{11}-M_{12}).
\end{equation}

In our case, where $M_{11} = A-\lambda\cdot I$ and $M_{12} = I$, we have
\begin{equation}
\det\left(\begin{bmatrix}
A_0 - \lambda\cdot I & I\\
I & A_0 - \lambda\cdot I
\end{bmatrix}\right) = \det(A_0 - (\lambda-1)\cdot I)\det(A_0 - (\lambda+1)\cdot I) = 0 \label{plus1}.
\end{equation}

We see that the left product satisfies the eigenvalue equation of $A_0$, where each eigenvalue has a value one greater than before.  The second product satisfies the eigenvalue equation of $A_0$, but with each eigenvalue shifting the value down by one.

\end{proof}

\begin{thm}
	For any $k$-regular graph $G$, the limiting distribution of the normalized graph cylinder is the Dirac $\delta$ distribution centered at zero. Moreover, the limiting distribution of the normalized Laplacian matrix of the graph cylinder is the uniform distribution on $[0,2]$.
\end{thm}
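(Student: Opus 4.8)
The plan is to combine Theorem 1, applied inductively, with a concentration argument for the normalized spectrum. Write $\lambda_1 = k, \lambda_2, \dots, \lambda_m$ for the eigenvalues of $A_0$, where $m = |V|$. First I would prove by induction on $n$ that the multiset of eigenvalues of $A_n$ is
\[
\left\{\, \lambda_i + (n - 2j) \ :\ 1 \le i \le m, \ 0 \le j \le n \,\right\},
\]
where the value $\lambda_i + (n-2j)$ carries multiplicity $\binom{n}{j}$. The base case $n = 1$ is precisely Theorem 1. For the inductive step, Theorem 1 applied to $A_n$ splits each eigenvalue $\mu$ into $\mu \pm 1$; collecting terms, the eigenvalue $\lambda_i + (n+1 - 2j)$ of $A_{n+1}$ receives a contribution $\binom{n}{j}$ from the $+1$ shift of $\lambda_i + (n - 2j)$ and a contribution $\binom{n}{j-1}$ from the $-1$ shift of $\lambda_i + (n - 2(j-1))$, so its multiplicity is $\binom{n}{j} + \binom{n}{j-1} = \binom{n+1}{j}$ by Pascal's rule. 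Since $A_n$ is $(k+n)$-regular, the eigenvalues of $W_n = \frac{1}{k+n} A_n$ are $\frac{\lambda_i + n - 2j}{k+n}$ with the same multiplicities.

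Next I would read the empirical spectral distribution of $W_n$ probabilistically. Fixing $i$ and letting $J \sim \mathrm{Bin}(n, \tfrac{1}{2})$, the eigenvalues arising from $\lambda_i$ are distributed as the random variable $\frac{\lambda_i + n - 2J}{k+n}$, which has mean $\frac{\lambda_i}{k+n}$ and variance $\frac{n}{(k+n)^2}$. Both tend to $0$ as $n \to \infty$, so Chebyshev's inequality gives convergence in distribution to the point mass $\delta_0$. Averaging over the finitely many $\lambda_i$ preserves the limit, and hence the empirical spectral distribution of $W_n$ converges weakly to $\delta_0$. This proves the first assertion, and it makes transparent why bipartiteness of $G$ is irrelevant in the limit: the finite-$n$ asymmetry lives entirely in the bounded shifts $\lambda_i$, which are annihilated by the $\frac{1}{k+n}$ scaling.

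For the Laplacian I would use that the normalized Laplacian of the $(k+n)$-regular graph underlying $A_n$ is $\mathcal{L}_n = I - W_n$, so its eigenvalues are $\frac{(k - \lambda_i) + 2j}{k+n}$. Taking the Perron eigenvalue $\lambda_1 = k$ gives the values $\frac{2j}{k+n}$ for $j = 0, \dots, n$, which are $n+1$ equally spaced points that become dense in $[0,2]$ as $n \to \infty$; the goal is to show these locations equidistribute, yielding the uniform distribution on $[0,2]$.

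The step I expect to be the main obstacle is pinning down the correct notion of ``limiting distribution'' for the Laplacian and reconciling it with the $\delta_0$ statement. If eigenvalues are weighted by their multiplicities $\binom{n}{j}$, then the very same binomial concentration that produces $\delta_0$ for $W_n$ forces the multiplicity-weighted Laplacian spectrum to concentrate at $1$ rather than spread out. The uniform law therefore has to be understood as the equidistribution of the distinct eigenvalue locations --- each counted once --- on $[0,2]$, and the real content is to state this mode of convergence precisely and verify that the grid $\{\,2j/(k+n) : 0 \le j \le n\,\}$ equidistributes, rather than to carry out any delicate estimate.
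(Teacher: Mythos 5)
Your proof of the first assertion is correct and follows the same skeleton as the paper's---both rest on iterating Theorem 1 to obtain the binomial multiplicity structure of the spectrum of $A_n$---but your concentration step is genuinely different and cleaner. After deriving the multiplicities, the paper argues through the cumulative distribution function: it cites McKay to assert that the binomial CDF is approximated by a logistic whose slope grows with $N$, hence tends to a step function at $0$, hence the density tends to $\delta_0$. You instead observe that the scaled eigenvalue $\frac{\lambda_i + n - 2J}{k+n}$, $J \sim \mathrm{Bin}(n,\tfrac{1}{2})$, has mean $\frac{\lambda_i}{k+n}$ and variance $\frac{n}{(k+n)^2}$, both tending to $0$, and invoke Chebyshev; this is shorter, elementary, and avoids the paper's loose asymptotics. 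Incidentally, your multiplicity formula is the correct one: the paper's displayed claim that after $2n$ steps the multiplicity of $\Lambda_0 + 2k$ is $\binom{n}{|k|}M_0$ contradicts its own computation two lines earlier (which gives $2M_0$ at the center after two steps); the correct count is $\binom{2n}{n+k}M_0$, which is exactly what your Pascal-rule induction yields in the form $\binom{n}{j}$ for the shift $n-2j$.

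On the Laplacian assertion, you have put your finger on a genuine defect in the paper rather than a gap in your own argument. The paper's entire proof of the uniform law is the remark that, in the limit, each individual normalized adjacency eigenvalue has vanishing probability and the support is $[-1,1]$, so the transformed Laplacian spectrum ``becomes'' uniform on $[0,2]$. This is precisely the non sequitur you identify: if the multiplicity-weighted spectral distribution of $W_n$ converges to $\delta_0$ (the first assertion), then the multiplicity-weighted spectral distribution of $\mathcal{L}_n = I - W_n$ converges to $\delta_1$, not to the uniform law; vanishing pointwise probabilities do not imply uniformity. The only reading under which the second assertion is true is the one you propose: the \emph{distinct} eigenvalue locations $\frac{(k-\lambda_i)+2j}{k+n}$, each counted once, form finitely many arithmetic progressions of common spacing $\frac{2}{k+n}$ spanning $[0,2]$, and such grids equidistribute trivially as $n \to \infty$. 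So your proposal does more than reprove the theorem by a different route: it supplies the precise mode of convergence (equidistribution of unweighted eigenvalue locations) that the second half of the statement needs in order to be consistent with the first half, something the paper's own proof never addresses.
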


\begin{proof}
	Consider the set of eigenvalues $\Lambda_0$ of $A_G$ given by
	\[
	\Lambda_0 = \{\lambda_1,\cdots,\lambda_n\}, 
	\]
	where the multiplicity of the $j^{th}$ eigenvalue is $m_j$. 
	
	Our graphs are simple graphs, so the trace $tr(A_G) = 0$, and the weighted sum of eigenvalues is also zero, namely
	\[
	\sum_{j=1}^{n} m_j \lambda_j = 0.
	\]

	Now, let $M_n$ represent the multiplicities of the set $\Lambda_n$.  Thus, our initial set is
	\[
	(\Lambda_0,M_0).
	\]
	
	Following equation \ref{plus1}, our next set is
	\[
	(\Lambda_1,M_1) = (\Lambda_0 - 1,M_0) \cup (\Lambda_0 + 1,M_0).
	\]

	Two steps later, we have
	\[
	(\Lambda_{2},M_{2}) = (\Lambda_0-2,M_0)\cup (\Lambda_0,2M_0) \cup (\Lambda_0+2,M_0).
	\]

	Continuing inductively, after $2n$ steps we have
	\begin{equation}
	(\Lambda_{2n},M_{2n}) = \bigcup_{k=-n}^{n} \left(\Lambda_0 + 2k,\binom{n}{|k|}M_0\right).
	\end{equation}
	
	Therefore, each eigenvalue in this set follows a shifted symmetric binomial distribution centered at its corresponding element in $\Lambda_0$. We denote a binomial distribution centered at $\lambda$ with probability $p$ and $n$ steps by 
	\[
	\mathbb{B}_{\lambda}(n,p). 
	\]
    
    \begin{rem}
		If $X \sim B(n,p)$, a standard binomial distribution, then our new notation suggests that $X - np + \lambda \sim \mathbb{B}_{\lambda}(n,p)$.  
	\end{rem} 
	Because $p= \half$ in our case, without loss of generality, we shall simply denote
	\[
	\mathbb{B}_{\lambda}(n,1/2) =\mathbb{B}_{\lambda}(n).
	\]Let $X$ be a random variable that follows the distribution of eigenvalues which we are studying. Then we have that  after $N$ steps, 
	\begin{equation}
	X \sim \frac{1}{|V|}\sum_{j=1}^{k} m_j \mathbb{B}_{\lambda_j}(N),
	\end{equation} where $|V|$ is the number of vertices in graph $G$. \\ 
	\begin{rem}
		We use $|V|$ since $\sum_j \frac{m_j}{|V|} = 1$.
	\end{rem} 
\vspace{4mm}
	This takes care of the case where the largest eigenvalue is $k+n$, the regularity of the graph plus the number of iterations.  Now, we want to consider normalized eigenvalues, i.e. the eigenvalues of  $W_n$. Let's denote the distribution of these normalized eigenvalues by $\hat{\Lambda}_n$. Let $\hat{X}\sim \hat{\Lambda}_n$. Then after N steps,
	\begin{equation}
	\hat{X} \sim \frac{1}{k+N} X \sim \frac{1}{(k+N)|V|} \sum_{j=1}^{k} m_j \mathbb{B}_{\hat{\lambda}_j}(N), 
	\end{equation}
	where $\hat{\lambda}_j = \frac{\lambda_j}{k+N}$.\\

	Now, take the limit as $N\rightarrow \infty$.
	\begin{eqnarray}
	\lim_{N\rightarrow \infty} \hat{\Lambda}_N = \lim_{N\rightarrow \infty} \frac{1}{(k+N)|V|}\sum \mathbb{B}_{\hat{\lambda}_j} (N) = \lim_{N\rightarrow \infty}\frac{\mathbb{B}_0(N)}{(k+N)}
	\end{eqnarray}
	
	As $N\rightarrow \infty$, the probability of any individual eigenvalue occurring converges to zero in probability. We then consider the cumulative distribution function:
	\[
	Pr(\hat{X}\le a) = \sum_{k=0}^{\lfloor aN\rfloor} \frac{\binom{N}{k}}{2^N} - \mu
	\]
	
	There is no simple closed form solution to this, but in \cite{Mc} we see that this is approximated closely by a truncated error function with slope $N$, or more explicitly, some monotonically increasing function of $N$. We can see that this holds by considering the ratio of binomial coefficients close to one half.  As $\binom{n}{k} = \binom{n}{n-k}$, we know this is symmetric about $\binom{n}{n/2}$, and that $\binom{n}{n/2}$ is the maximum value.  So we consider for some large $n$ and some $k>0$
	\begin{equation}
	\frac{\binom{2n}{n-k}}{\binom{2n}{n}} = \frac{n(n-1)\cdots (n-k_1)}{(n+k)(n+k-1)\cdots(n+1)} \approx \left(\frac{n}{n+k}\right)^k \approx \left(1 + \frac{k}{n}\right)^{-k}. 
	\end{equation}
	
	If $k$ is large, this can be further approximated to $e^{-k}$. Thus, the decay away from the center is exponential when $n$ is large.  
    Asymptotically, the cdf becomes a logistic with slope $n$, i.e.
	\begin{eqnarray}
	\sum_{j=0}^{k} \frac{\binom{n}{j}}{2^n} - \mu \approx \left(1 + \exp\left(\frac{-nk}{2}\right)\right)^{-1}. 
	\end{eqnarray}

In the limit, this becomes a step function at zero.  Thus, taking the derivative of our cumulative distribution function, we find that our probability distribution function is that of the Dirac $\delta$ distribution centered at zero.  \\

In the case of the normalized Laplacian, we know that for $k$-regular graphs
\[
\mathcal{L} = I - \frac{1}{k}A.
\]

Thus, the eigenvalues of $\mathcal{L}$ are 
\[
\Lambda_{\mathcal{L}} = \{ 1 - \lambda/k  \hspace{1mm}|\hspace{1mm}  \lambda \in \Lambda_{A}\}.
\]
	That is, simply transform the set by $\frac{1-x}{k}$.  In the limit of a graph cylinder, the probability of any single eigenvalue of the normalized adjacency matrix becomes zero, with max $[-1,1]$. This transforms our Laplacian eigenvalues into a uniform distribution on $[0,2]$.	\\

\end{proof}
\begin{figure}[ht!]
\centering
\includegraphics[width=5in]{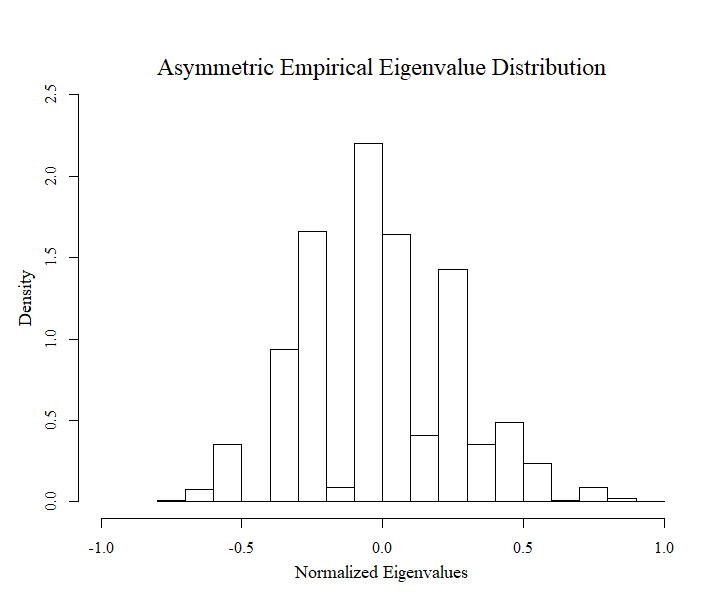}
\caption{This histogram depicts the empirical distribution of eigenvalues after 10 iterations on a non-bipartite graph, the K5 graph. Note the asymmetry. Recall that this is the sum of multiple shifted symmetric binomial distributions.}
\label{hypercube}
\end{figure}

\section{Example: Hypercubes}

The hypercube graphs $Q_n$ are defined geometrically as graphs with $2^n$ points, each of which has degree $n$.  
The $2^n$ points are those in $\{0,1\}^n$, and the edges are drawn such that two vertices are endpoints of an edge if they differ in exactly one bit. That is, the square is isomorphic to $Q_2$, and the cube is isomorphic to $Q_3$.  The adjacency matrix for $Q_2$ is
\[
A = \begin{bmatrix}
0 & 1 & 0 & 1\\
1 & 0 & 1 & 0\\
0 & 1 & 0 & 1\\
1 & 0 & 1 & 0\\
\end{bmatrix}
\].

We can compute the eigenvalues without much algebra.  Since this is a regular graph, the largest eigenvalue is the degree of regularity, $2$.  Since this graph is bipartite, the smallest eigenvalue is $-2$.  We see one degeneracy by looking at the first and third columns and a second degeneracy in the second and fourth columns. Thus, the list of eigenvalues is
\[
\Lambda_0 = \{-2,0,0,2\}
\]

We scale this to fit within $[-1,1]$ by dividing by the degree of regularity and we see
\[
\hat{\Lambda}_0 = \{-1,0,0,1\}
\]

Now, taking the graph cylinder of $Q_2$, we have the eigenvalue set
\[
\Lambda_1 = \{-3,-1,-1,1,-1,1,1,3 \} = \{-3,-1,-1,-1,1,1,1,3\}.
\]

When normalized, this becomes 
\[
\hat{\Lambda}_1 = \{-1,-1/3,-1/3,-1/3,1/3,1/3,1/3,1\}.
\]

Repeating this, we see $\Lambda_n$ is the centered binomial distribution with $2^n$ values.  That is, the multiplicity of $\pm k$ is $\binom{n}{|k|}$.

Thus, our normalized distribution $W_n$ is given by the following mass function:
\[
P\left(\lambda = 1 - \frac{2k}{n}\right) = \frac{\binom{n}{n-k}}{2^n }= \frac{\binom{n}{k}}{2^n}
\]

When we take the limit as $n$ goes to infinity and letting $\alpha = k/n$ we get:
\begin{eqnarray}
P(\lambda = 1-2\alpha) & = & \lim_{n\rightarrow \infty} \frac{\binom{n}{\alpha n}}{2^n} \nonumber \\
& = & \lim_{n\rightarrow \infty} \frac{1}{2^n}\sqrt{\left(\frac{1}{2\pi\alpha(1-\alpha )n }\right)}\frac{1}{\alpha^{\alpha}(1-\alpha)^{1-\alpha}}
\end{eqnarray}

Here we have used Stirling approximation for the binomial coefficient.  However, given any particular $\alpha$, as $n$ goes to infinity we get identically zero probabilities.  The solution is to consider a continuous probability.  Let's consider the cumulative distribution:
\[
P(\lambda \le 1-2\alpha) =  \frac{1}{2^n}\sum_{j=0}^{k} \binom{n}{k} 
\]

We know that this approximates a logistic function with slope $f(n)$, where $f$ is some monotonically increasing function \cite{Mc} at $\alpha = \half$.  Considering the limit
\begin{equation}
P(\lambda \le 1-2\alpha) =\left\{ \begin{array}{cc} 
	0 & \alpha< \half\\
	1 & \alpha >\half
	\end{array}  \right\},
\end{equation} this gives us a step function at zero.

\begin{figure}[ht!]
\centering
\includegraphics[width=5in]{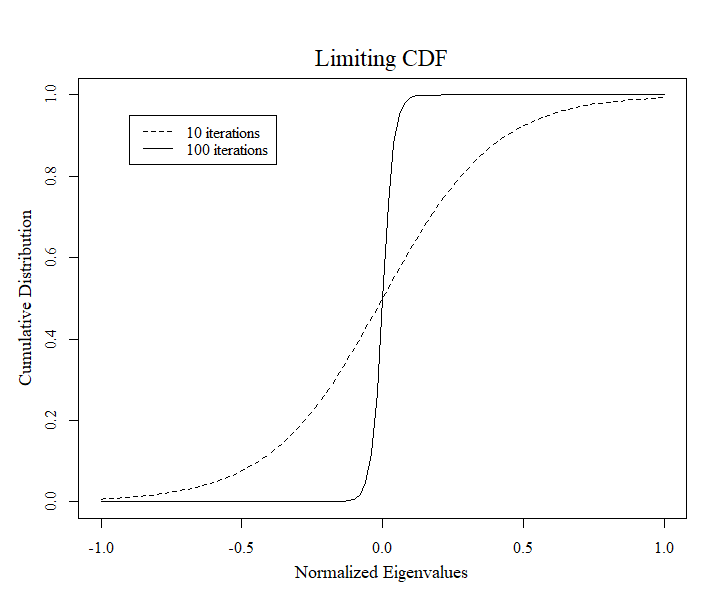}
\caption{This plot depicts the limiting nature of the CDF of eigenvalues, using the hypercube as an example. Note that as the number of iterations increases, the CDF approaches a step function centered at 0.}
\label{asymmetry}
\end{figure}

Thus, because the probability distribution function is the derivative of the cumulative distribution, we have a Dirac $\delta$-distribution.

\begin{equation}
P(\lambda = 1-2\alpha) = \left\{ 
\begin{array}{cc}
0 &\alpha \ne \half \\
\infty & \alpha = \half.\end{array}\right\}
\end{equation}

\section{Conclusion and Future Work}

Our original idea was to produce a family of good expander graphs which arose from a new technique.  The graphs of Lubotzky, Phillips, and Sarnak \cite{LPS}, as well as the ``zig-zag" product of Reingold, Vadhan, and Widgerson \cite{RVW} are two well-known constructions of families of expander graphs. Oher related works of Margulis, Gabber, and Galil provide types of expander graphs with which are ``concentrators" \cite{Mar,GG}.  Our work turns out to have none of these properties.  We began with the simple notion that a graph cylinder, as we define it, doubles the number of vertices while increasing the degree of regularity by exactly one.  These graphs do not, however, increase the spectral expansion constant.  Rather, they reduce it as much as possible.  This is particularly well seen in the spectrum of the normalized Laplacian matrix, which limits to a uniform distribution.  Thus, the smallest positive eigenvalue is within $\varepsilon$ of zero.  However, because the adjacency matrices of graph cylinders do become sparse very quickly, there may be some use in applying these graphs as models for neural networks.  Even though the spectral expansion is diminished, the diameter of a graph cylinder grows linearly while the number of vertices grows exponentially. Therefore, using a graph cylinder as the architecture of a neural network naturally produces network highways, which is still of interest to us. In future work, we also wish to explore the role of deformed Laplacians as models for neural networks, taking advice from \cite{FS,GHN,M}.  From the authors' viewpoint, there is also interest in exploring the role of deformations of graphs in quantum walks on graphs, as well as using deformed Laplacians to study the spectral expansion of directed graphs.

\end{document}